\definecolor{darkgreen}{rgb}{0,0.5,0}
\definecolor{darkblue}{rgb}{0,0,0.8}
\definecolor{darkred}{rgb}{0.8,0,0}
\newtheorem{definition}{Definition}[section]
\newtheorem{lemma}[definition]{Lemma}
\newtheorem{theorem}[definition]{Theorem}
\newtheorem{algo}[definition]{Algorithm}
\newcommand{\bigo}{\mathcal{O}}
\newcommand{\ham}{\mathrm{Ham}}
\DeclarePairedDelimiter\floor{\lfloor}{\rfloor}
\newcommand{\etconst}{15555}
\newcommand{\ExpR}[1]{\mathbb{E}_{r \in [1,9)}\Big[#1\Big]}
\newcommand{\uglyconst}{1152  + 192 (\ln 2))}
\newcommand{\mergedconst}{2593} %32*ln2+2*(1152+192ln2)
\newcommand{\given}{\Big\vert}
\newcommand{\PrR}[1]{\Pr_{r \in [1,9)}\Big[#1\Big]}
\newcommand{\indicator}[1]{\ensuremath{\mathbf{1}\!\left[#1\right]}}
\newcommand\defeq{\stackrel{\mathclap{\normalfont{\mbox{\text{\tiny{def}}}}}}{=}}
\newcommand{\snote}[1]{&\qquad\text{(#1)}}
\newcommand{\bnote}[1]{&\qquad\parbox[t]{6cm}{(#1)}}
\renewcommand{\paragraph}{\subparagraph}
 \title{\bf Approximating Approximate Pattern Matching}
 \author[1]{\Large Jan Studen\'y}
 \author[2]{\Large Przemys\l{}aw~Uzna\'nski}
 \affil[1]{ETH Zürich, Switzerland}
 \affil[2]{University of Wrocław, Poland}
 \date{}
\begin{document}
\maketitle

% % \setcounter{page}{0}
 \thispagestyle{empty}

% % \setcounter{page}{0}
% \thispagestyle{empty}

%\linenumbers
\begin{abstract}
Given a text $T$ of length $n$ and a pattern $P$ of length $m$, the approximate pattern matching problem asks for computation of a particular \emph{distance} function between $P$ and every $m$-substring of $T$.
We consider a $(1\pm\varepsilon)$ multiplicative approximation variant of this problem, for $\ell_p$ distance function.
In this paper, we describe two $(1+\varepsilon)$-approximate algorithms with a runtime of $\widetilde\bigo(\frac{n}{\varepsilon})$ for all (constant) non-negative values of $p$.
For constant $p \ge 1$ we show a deterministic $(1+\varepsilon)$-approximation algorithm. Previously, such run time was known only for the case of $\ell_1$ distance, by Gawrychowski and Uznański [ICALP 2018] and only with a randomized algorithm.
For constant $0 \le p \le 1$ we show a randomized algorithm for the $\ell_p$, thereby providing a smooth tradeoff between algorithms of Kopelowitz and Porat [FOCS~2015, SOSA~2018] for Hamming distance (case of $p=0$) and of Gawrychowski and Uznański for $\ell_1$ distance.
\end{abstract}

\section{Introduction}
Pattern matching is one of the core problems in text processing algorithms. Given a text $T$ of length $n$ and a pattern $P$ of length $m$, $m \leq n$, both over an alphabet $\Sigma$, one searches for occurrences of $P$ in $T$ as a substring. A generalization of a pattern matching is to find substrings of $T$ that are similar to $P$, where we consider a particular string distance and ask for all $m$-substrings of $T$ where the distance to $P$ does not exceed a given threshold, or simply report the distance from $P$ to every $m$-substring of $T$. Typical distance functions considered are Hamming distance, $\ell_1$ distance, or in general $\ell_p$ distances for some constant $p$, assuming input is over a numerical, e.g. integer, alphabet.

For reporting all Hamming distances, Abrahamson \cite{Abrahamson87} described an algorithm with the complexity of $\bigo(n\sqrt{m \log m})$. Using a similar approach, the same complexity was obtained in \cite{lipsky2003eficient} and later in conference works \cite{Amir2005,DBLP:conf/cpm/CliffordCI05} for reporting all $\ell_1$ distances. It is a major open problem whether near-linear time algorithm, or even $\bigo(n^{3/2-\varepsilon})$ time algorithm, is possible for such problems. A conditional lower bound~\cite{Clifford} was shown, via a reduction from matrix multiplication. This means that existence of combinatorial algorithm with runtime $\bigo(n^{3/2-\varepsilon})$ solving the problem for Hamming distances implies  combinatorial algorithms for boolean matrix multiplication with $\bigo(n^{3-\delta}) $ runtime, which existence is unlikely. If one is uncomfortable with poorly defined notion of \emph{combinatorial} algorithms, one can apply the reduction to obtain a lowerbound of $\Omega(n^{\omega/2})$ for Hamming distances pattern matching, where $2 \le \omega < 2.373$ is a matrix multiplication exponent.\footnote{Although the issue is that we do not even know whether $\omega>2$ or not.} Later, the complexity of pattern matching under Hamming distance and under $\ell_1$ distance was proven to be identical (up to polylogarithmic terms) \cite{DBLP:conf/icalp/0001LU18,DBLP:journals/ipl/LipskyP08a}.

The mentioned hardness results serve as a motivation for considering relaxation of the problems, with  $(1+\varepsilon)$ multiplicative approximation being the obvious candidate. For Hamming distance, Karloff \cite{Karloff93} was the first to propose an efficient approximation algorithm with a run time of $\bigo(\frac{n}{\varepsilon^2}\log^3m)$. The $\frac{1}{\varepsilon^2}$  dependency was believed to be inherent, as is the case for e.g. space complexity of sketching of Hamming distance, cf. \cite{DBLP:conf/soda/Woodruff04,DBLP:journals/toc/JayramKS08,DBLP:journals/siamcomp/ChakrabartiR12}. However, for approximate pattern matching that was refuted by Kopelowitz and Porat \cite{DBLP:conf/focs/KopelowitzP15,DBLP:conf/soda/KopelowitzP18}, by providing randomized algorithms with complexity $\bigo(\frac{n}{\varepsilon} \log n \log m \log \frac{1}{\varepsilon} \log |\Sigma|)$ and $\bigo(\frac{n}{\varepsilon} \log n \log m)$ respectively. Moving to $\ell_1$ distance, Lipsky and Porat \cite{DBLP:journals/algorithmica/LipskyP11} gave a deterministic algorithm with a run time of  $\bigo(\frac{n}{\varepsilon^2} \log m \log U)$, while later Gawrychowski and Uznański \cite{DBLP:conf/icalp/GawrychowskiU18} have improved the complexity to a (randomized) $\bigo(\frac{n}{\varepsilon} \log^2 n \log m \log U)$, where $U$ is the maximal integer value on the input. Additionally, we refer the reader to the line of work on other relaxations on exact the distance reporting  \cite{DBLP:journals/jal/AmirLP04,CliffordFPSS16,DBLP:conf/icalp/GawrychowskiU18,Amir2005}.

A folklore result  (c.f. \cite{DBLP:journals/algorithmica/LipskyP11}) states that the randomized algorithm with a run time of  $\widetilde\bigo(\frac{n}{\varepsilon^2})$ is in fact possible for any $\ell_p$ distance, $0<p\le2$, with use of $p$-stable distributions and convolution.\footnote{We use $\widetilde\bigo$ notation to hide factors polylogarithmic in $n, m, |\Sigma|, U$ and $\varepsilon^{-1}$.} Such distributions exist only when $p\le2$, which puts a limit on this approach. See \cite{nolan2003stable} for wider discussion on $p$-stable distributions.
Porat and Efremenko \cite{DBLP:conf/soda/PoratE08} have shown how to approximate general distance functions between pattern and text in time $\bigo(\frac{n}{\varepsilon^2} \log^2 m \log^3 |\Sigma| \log B_d)$, where $B_d$ is upperbound on distance between two characters in $\Sigma$. Their solution does not immediately translates to $\ell_p$ distances, since it allows only for score functions of form $\sum_{j} d(t_{i+j}, p_j)$ where $d$ is arbitrary metric over $\Sigma$. Authors state that their techniques generalize to computation of $\ell_2$ distances, but the dependency $\varepsilon^{-2}$ in their approach is unavoidable.
 \cite{DBLP:journals/algorithmica/LipskyP11} observe that $\ell_2$ pattern matching can be in fact computed in $\bigo(n \log m)$ time, by reducing it to a single convolution computation. This case and analogously case of $p=4,6,\ldots$ are the only ones where fast and exact algorithm is known.

We want to point that for $\ell_\infty$ pattern matching there is an approximation algorithm of complexity $\bigo(\frac{n}{\varepsilon} \log m \log U)$ by Lipsky and Porat \cite{DBLP:journals/algorithmica/LipskyP11}. Moving past pattern matching, we want to point that in a closely related problem of computing $(\min,+)$-convolution there exists $\bigo(\frac{n}{\varepsilon} \log \frac{n}{\varepsilon} \log U)$ time algorithm computing $(1+\varepsilon)$ approximation, cf. Mucha et al. \cite{DBLP:conf/soda/MuchaW019}.

 Two questions follow naturally. First, is there a $\widetilde\bigo(\frac{n}{\textrm{poly}(\varepsilon)})$ algorithm for $\ell_p$ norms pattern matching when $p > 2$? Second, is there anything special to $p=0$ and $p=1$ cases that allows for faster algorithms, or can we extend their complexities to other $\ell_p$ norms? To motivate further those questions, observe that in the regime of maintaining $\ell_p$ sketches in the \emph{turnstile} streaming model (sequence of updates to vector coordinates), one needs small space of $\Theta(\log n)$ bits when $p \le 2$ (cf. \cite{DBLP:conf/soda/KaneNW10}), while when $p > 2$ one needs large space of $\Theta(n^{1-2/p} \log n)$ bits (cf. \cite{DBLP:conf/icalp/Ganguly15,DBLP:conf/approx/LiW13}) meaning there is a sharp transition in problem complexity at $p=2$. Similar phenomenon of transition at $p=2$ is observed for $p$-stable distributions, and one could expect such transition to happen in the pattern matching regime as well.

In this work we show that for any \emph{constant} $p \ge 0$ there is an algorithm of complexity $\widetilde\bigo(\frac{n}{\varepsilon})$, replicating the phenomenon of linear dependency on $\varepsilon^{-1}$ from Hamming distance and $\ell_1$ distance to all $\ell_p$ norms. Additionally this provides evidence that no transition at $p=2$ happens, and so far to our understanding cases of $p>2$ and $p<2$ are of similar hardness.

\subsection{Definitions and preliminaries.}
\paragraph{Model.}
In most general setting, our inputs are strings taken from arbitrary alphabet $\Sigma$. We use this notation only when structure of alphabet is irrelevant for the problem (e.g. Hamming distances). However, when considering $\ell_p$ distances we focus our attention over an integer alphabet $[U] \defeq \{0,1,...,U-1\}$ for some $U$. One can usually assume that $U = \textrm{poly}(n)$, and then $\log U$ term can be safely hidden in the $\widetilde\bigo$ notation, however we provide the dependency explicitly in Theorem statements. Even without such assumption, we can assume standard word RAM model, in which arithmetic operations on words of size $\log U$ take constant time. Otherwise the complexities have an additional $\log U$ factor. We also denote $u = \log U$. While we restrict input integer values, we allow intermediate computation and output to consist of floating point numbers having $u$ bits of precision.
\paragraph{Distance between strings.}
Let $X = x_1 x_2 \ldots x_n$ and $Y = y_1 y_2 \ldots y_n$ be two strings. For any $p > 0$, we define their $\ell_p$ distance as
$$\ell_p(X,Y) = \left(\sum_i |x_i - y_i|^p \right)^{1/p}.$$
Particularly, $\ell_1$ distance is known as \emph{Manhattan distance}, and $\ell_2$ distance is known as \emph{Euclidean distance}. Observe that the $p$-th power of $\ell_p$ distance has particularly simpler form of $\ell_p(X,Y)^p = \sum_i |x_i - y_i|^p$.

The \emph{Hamming distance} between two strings is defined as $$\ham(X,Y) =  | \{ i : x_i \not= y_i \}|.$$
Adopting the convention that $0^0 = 0$ and $x^0 = 1$ for $x \not= 0$, we observe that $(\ell_p)^p$ approaches Hamming distance as $p \to 0$. Thus Hamming distance is usually denoted as $\ell_0$ (although $(\ell_0)^0$ is more precise notation).

\paragraph{Text-to-pattern distance.}
For text $T = t_1t_2\ldots t_n$ and pattern $P = p_1p_2\ldots p_m$, the text-to-pattern distance is defined as an array $S$ such that, for every $i$, $S[i] = d(T[i+1\ ..\ i+m],P)$ for particular distance function $d$. Thus, for $\ell_p$ distance $S[i] = \left(\sum_{j=1}^m |t_{i+j}-p_j|^p\right)^{1/p}$, while for Hamming distance $S[i] = |\{ j \in \{1, \ldots, m\} : t_{i+j} \not= p_j \}|$. Then $(1+\varepsilon)$-approximate distance is defined as an array $S_{\varepsilon}$ such that, for every $i$, $(1-\varepsilon) \cdot S[i] \le S_{\varepsilon}[i] \le (1+\varepsilon) \cdot S[i]$.

\paragraph{Rounding and arithmetic operations.}
For any value $x$, we denote by $x^{(i)} = \lfloor x/2^{i} \rfloor \cdot 2^i$ the value with $i$ y bits rounded. However, with a little stretch of notation, we do not limit value of $i$ to be positive.
We denote by $\|r\|_c$ the norm \emph{modulo} $c$, that is
$\|r\|_c = \min( r \bmod c, c - (r \bmod c))$.

\subsection{Our results.}
In this paper we answer favorably both questions by providing relevant algorithms. First, we show how to extend the deterministic $\ell_1$ distances algorithm into $\ell_p$ distances, when $p \ge 1$.

\begin{theorem}
\label{th:large_p}
For any $p\ge1$ there is a \emph{deterministic} algorithm computing $(1+\varepsilon)$ approximation to  pattern matching under $\ell_p$ distances in time $\bigo(\frac{n}{\varepsilon} \log m \log U)$ (assuming $\varepsilon \le 1/p$).
\end{theorem}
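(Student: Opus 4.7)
I reduce $\ell_p$ pattern matching to the deterministic $\ell_1$ subroutine developed earlier in the paper. Since $\varepsilon\le 1/p$, a $(1+p\varepsilon)$-approximation of $\ell_p^p(i):=\sum_j |t_{i+j}-p_j|^p$ yields an $(1+\varepsilon)$-approximation of $\ell_p(T_i,P)$ via $(1+p\varepsilon)^{1/p}\le 1+\varepsilon$ (with a matching lower bound in the regime $\varepsilon\le 1/p$), so it suffices to approximate $\ell_p^p$. Fix $\delta=\Theta(\varepsilon/p)$ and $R_k=(1+\delta)^k$ for $k=0,\dots,K=\bigo(\log U/\varepsilon)$ with $R_K\ge U$. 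For $d\in[R_k,R_{k+1})$ one has $d^p=R_k^p(1\pm\bigo(\varepsilon))$, so grouping positions by their distance bucket and Abel-summing,
\[
\ell_p^p(i) \;=\; (1\pm\bigo(\varepsilon))\cdot \sum_{k=0}^{K} (R_k^p-R_{k-1}^p)\, H_{R_k}(i),\qquad H_R(i):=|\{j:|t_{i+j}-p_j|>R\}|,
\]
with $R_{-1}:=0$. This is a \emph{positive} combination, so it suffices to $(1+\varepsilon)$-approximate each $H_{R_k}(i)$.

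Each threshold count reduces to $\ell_1$ calls via the elementary identity
\[
\min(|x-y|,R) \;=\; R + |x-y| - \tfrac12|x-y-R| - \tfrac12|x-y+R|,
\]
which gives $C_R(i):=\sum_j\min(|t_{i+j}-p_j|,R) = mR + \ell_1(T_i,P) - \tfrac12\ell_1(T_i,P{+}R) - \tfrac12\ell_1(T_i,P{-}R)$, where $P{\pm}R$ denotes $P$ with every character shifted by $\pm R$. The value $H_{R_k}(i)$ is then extracted from finite differences of $C_{R_k}$ on the grid $\{R_k\}$; equivalently, integration by parts of $\ell_p^p(i)=\int_0^U p\,t^{p-1}H_t(i)\,dt$ yields a closed-form linear combination of $\ell_1(T_i,P)$ and the $\{C_{R_k}(i)\}$. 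All ingredients thus reduce to invocations of the deterministic $\ell_1$ subroutine on $T$ and shifted patterns $P{\pm}R_k$.

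\emph{Main obstacle.} The principal difficulty is fitting into the $\bigo(\frac{n}{\varepsilon}\log m\log U)$ runtime budget despite $K=\Theta(\log U/\varepsilon)$ levels. A naive invocation of the $\ell_1$ subroutine once per level, at cost $\bigo(\frac{n}{\varepsilon}\log m\log U)$ each, would give a prohibitive $\widetilde\bigo(n\log^2 U/\varepsilon^2)$ total. The technical core is to save the $\log U/\varepsilon$ factor by \emph{sharing} work across the $K$ pattern shifts: the deterministic $\ell_1$ algorithm itself decomposes character values by bit level and performs $\bigo(\log U)$ FFT convolutions, and the shifts $P{\pm}R_k$ for the geometric grid $\{R_k\}$ can be aligned with those bit levels so that the $K$ invocations are served by a single $\bigo(\frac{n}{\varepsilon}\log m\log U)$ pass. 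Verifying this merged execution, and checking that the per-level multiplicative errors on $H_{R_k}(i)$ combine to the desired $(1+\varepsilon)$ bound on $\ell_p^p(i)$, is the technical heart of the proof.
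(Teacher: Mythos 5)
Your reduction has three genuine gaps, two of which are fatal as written. First, it is circular: the ``deterministic $\ell_1$ subroutine developed earlier in the paper'' with runtime $\bigo(\frac{n}{\varepsilon}\log m\log U)$ does not exist prior to this theorem --- it is exactly the $p=1$ instance of the statement being proven (the known deterministic $\ell_1$ algorithm of Lipsky--Porat costs $\bigo(\frac{n}{\varepsilon^2}\log m\log U)$, and the $\frac{n}{\varepsilon}$ algorithm of Gawrychowski--Uzna\'nski is randomized), so the base case of your reduction is itself unproven. Second, and more fundamentally, the error analysis breaks because your ingredients are only available approximately and you combine them with cancellation. The identity $C_R(i)=mR+\ell_1(T_i,P)-\tfrac12\ell_1(T_i,P{+}R)-\tfrac12\ell_1(T_i,P{-}R)$ is a signed combination of terms of magnitude up to $mR+\ell_1(T_i,P)$, while $C_R(i)$ itself can be arbitrarily small (it is $0$ when $T_i=P$); a $(1\pm\varepsilon)$ relative error on each $\ell_1$ value therefore leaves an additive error of order $\varepsilon mR$ in $C_R(i)$, which is not $\bigo(\varepsilon)\,C_R(i)$. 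Extracting $H_{R_k}(i)$ by finite differences of the $C_{R_k}$ (or by the integration-by-parts formula, whose terms for $p>1$ again have mixed signs and comparable magnitudes) amplifies this further, so the step ``it suffices to $(1+\varepsilon)$-approximate each $H_{R_k}$'' cannot be fed by approximate $\ell_1$ calls, and exact $\ell_1$ in near-linear time is precisely what is not known. Third, the runtime claim rests on the asserted ``merged execution'' of $K=\Theta(\log U/\varepsilon)$ calls on the shifted patterns $P\pm R_k$ within a single $\bigo(\frac{n}{\varepsilon}\log m\log U)$ pass; the geometric shifts $(1+\delta)^k$ have no evident alignment with the power-of-two, bit-level structure of the $\ell_1$ decomposition, and you offer no construction --- you correctly identify this as the technical heart, but it is missing.

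The paper avoids all three problems by working at the level of the character distance function rather than reducing to $\ell_1$ queries: it decomposes $|x-y|^p$ into $\bigo(\log U)$ summands $\widehat{g}_i$, each obtained by rounding inputs to bit level $i$ and reducing modulo $B_i=2^i/\eta$, so each summand is effectively over an alphabet of size $1/\eta=\bigo(1/\varepsilon)$ and is computed \emph{exactly} by the Lipsky--Porat convolution theorem in $\bigo(\eta^{-1}n\log m)$ time. The error is then controlled pointwise and additively: for each character pair, the levels with $2^i\ge|x-y|$ contribute $0$, the intermediate levels incur no error at all (the modular reduction is invisible there), and the few low levels contribute $\bigo(p\,2^i|x-y|^{p-1})$ each, summing to $\bigo(p\,\eta|x-y|^p)$ by convexity (this is where $p\ge1$ is used). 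No two independently approximated large quantities are ever subtracted, and no work-sharing trick is needed because the number of convolution rounds is $\bigo(\log U)$, not $\bigo(\log U/\varepsilon)$. If you want to salvage your threshold-counting route, you would need ingredients with per-character additive error $\bigo(\varepsilon|x-y|^p)$ rather than multiplicatively approximate $\ell_1$ values, which essentially forces you back to a character-level decomposition of the above kind.
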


We then move to the case of $\ell_p$ distances when $p < 1$. We show that it is possible to construct a randomized algorithm with the desired complexity.

\begin{theorem}
\label{th:small_p}
For $0 < p < 1$, there is a \emph{randomized} algorithm computing $(1+\varepsilon)$ approximation to pattern matching under $\ell_p$ distances in time $\bigo(p^{-1} \varepsilon^{-1}n \log m \log^2 U \log n)$. The algorithm is correct with high probability.\footnote{Probability at least $1-1/n^c$ for arbitrarily large constant $c$.}
\end{theorem}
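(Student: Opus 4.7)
My plan is to first reduce the task of $(1+\varepsilon)$-approximating $\ell_p$ to that of $(1+p\varepsilon/c)$-approximating the array $S^p[i] = \sum_j |t_{i+j}-p_j|^p$, for a suitable constant $c$. Since $(1+\delta)^{1/p} \le 1+\varepsilon$ whenever $\delta \le p\varepsilon/c$, taking the $p$-th root of a $(1+p\varepsilon/c)$-approximation of $S^p$ yields the desired $(1+\varepsilon)$-approximation of $\ell_p$. This reduction is the natural source of the $p^{-1}$ factor in the advertised running time, since we must henceforth work with additive resolution $p\varepsilon$ rather than $\varepsilon$.

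For the $p$-th power form I would use a layered decomposition based on the identity $|x|^p = p \int_0^\infty t^{p-1}\mathbf{1}[|x|>t]\,dt$, discretized along a geometric grid $t_\ell = (1+\gamma)^\ell$ with $\gamma = \Theta(\varepsilon/p)$. This yields
\begin{equation*}
S^p[i] \;\approx\; \sum_{\ell=0}^{L}\bigl(t_{\ell+1}^p - t_\ell^p\bigr)\,N_\ell[i], \quad N_\ell[i] \defeq \bigl|\{j : |t_{i+j}-p_j| > t_\ell\}\bigr|,
\end{equation*}
with multiplicative error $(1+O(p\varepsilon))$ using $L = O(p^{-1}\varepsilon^{-1}\log U)$ threshold levels. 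The problem therefore reduces to approximating each count array $N_\ell[\cdot]$ simultaneously for all shifts.

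For each threshold $\tau = t_\ell$, the subproblem is to compute, for every shift $i$, the number of positions $j$ with $|t_{i+j}-p_j| > \tau$. I would express this predicate through a randomized hashing of the alphabet so that it becomes detectable by $O(\log n)$ indicator convolutions between suitably preprocessed versions of $T$ and $P$, apply FFT to each, and take median/quantile statistics to obtain high-probability $(1+\varepsilon)$-estimates of $N_\ell[\cdot]$ uniformly over shifts via a Chernoff bound combined with a union bound over the $n$ shifts and $L$ levels. Per level this costs $\widetilde\bigo(n)$ time; multiplying by the $L$ levels and accounting for the $O(\log^2 U)$ word-RAM overhead from the integer arithmetic yields the announced $\bigo(p^{-1}\varepsilon^{-1} n \log m \log^2 U \log n)$ total running time.

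The main obstacle I foresee is ensuring that per-level multiplicative approximations of the $N_\ell$ combine into a \emph{multiplicative} guarantee on $\sum_\ell (t_{\ell+1}^p - t_\ell^p)\,N_\ell[i]$, rather than degrading into additive errors that would inflate the $\varepsilon$-dependency to $\varepsilon^{-2}$. Controlling this — most likely by first computing a constant-factor coarse estimate of $S^p[i]$ (via a single round of $p$-stable projections) and using it to calibrate per-level error budgets, in the spirit of the $\ell_1$ technique of Gawrychowski and Uzna\'nski and the Hamming-distance algorithm of Kopelowitz and Porat — is the technical crux of the argument.
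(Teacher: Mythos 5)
Your reduction from $\ell_p$ to $(\ell_p)^p$ and the layer-cake discretization are fine as far as they go, but the heart of the argument --- the per-level subroutine --- is missing, and it is exactly where the difficulty of the problem lives. For a fixed threshold $\tau$, approximating $N_\ell[i] = |\{j : |t_{i+j}-p_j| > \tau\}|$ for \emph{all} shifts $i$ is itself a counting pattern-matching problem of the same flavour as Hamming distance: the predicate $|a-b|>\tau$ is an order/threshold predicate, which is not preserved by any hashing of the alphabet, so the claim that it ``becomes detectable by $O(\log n)$ indicator convolutions'' after randomized hashing is unjustified --- hashing certifies equality, not magnitude of differences. The known routes for such counts either compute them exactly in roughly $n\sqrt{m}$ time, or give $(1+\varepsilon)$-multiplicative approximations in $\widetilde\bigo(n/\varepsilon)$ time per threshold (Karloff / Kopelowitz--Porat style machinery, and even that is stated for equality predicates); since a $(1\pm\varepsilon)$ guarantee on the weighted sum over your $L=\Theta(\mathrm{poly}(\varepsilon^{-1},p)\log U)$ levels essentially requires per-level relative accuracy $\varepsilon$ (or an additive budget calibrated against $S^p[i]$, which you do not know), the total cost degenerates to $\widetilde\bigo(n/\varepsilon^2)$ --- precisely the barrier the theorem is supposed to beat. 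You yourself flag this as the ``technical crux'' and gesture at calibrating per-level error budgets with a coarse $p$-stable estimate, but no mechanism is given for why such a calibration makes each $N_\ell$ computable within the time budget; as written, the central step of the proof is an unproved assumption rather than a lemma.

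For contrast, the paper never passes through threshold counts: it reuses the bit-rounding decomposition $\widehat g_i$ from the $p\ge 1$ case (each $\widehat g_i$ is effectively over an alphabet of size $1/\eta$, so the Lipsky--Porat convolution theorem handles all $\bigo(\log U)$ scales in $\bigo(\eta^{-1} n\log m\log U)$ time), and repairs the failure of the convexity-based error bounds for $p<1$ by multiplying both strings by a random scalar $r\in[1,9)$. The point of this random scaling is that $\|rx-ry\|_{B_i}$ becomes (piecewise) nearly uniformly distributed, which allows one to bound the \emph{expected} error contributed by each $\widehat g_i$ by $\bigo(\eta|x-y|^p)$, and then finish with Markov's inequality, $\Theta(\log n)$-fold median amplification, and a union bound. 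If you want to salvage your layered route, you would need an argument of comparable strength showing that each threshold-count array can be approximated to the required accuracy in time $\widetilde\bigo(n)$ amortized over levels, and no such argument is currently on the table.
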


Finally, combining with existing $\ell_0$ algorithm from \cite{DBLP:conf/soda/KopelowitzP18} we obtain as a corollary that for \emph{constant} $p \ge 0$ approximation of pattern matching under $\ell_p$ distances can be computed in $\widetilde\bigo(\frac{n}{\varepsilon})$ time.

%\paragraph{Overview of the techniques.}

\section{Approximation of $\ell_p$ distances}
We start by showing how convolution finds its use  in counting versions of pattern matching, either exact or approximation algorithms. Consider the case of pattern matching under $\ell_2$ distances. Observe that we are looking for $S$ such that $S[i]^2 = \sum_{j-k=i} (t_j - p_k)^2 = \sum_j t_j^2 + \sum_k p_k^2 - 2 \sum_{j-k=i} t_j p_k$. The last  term is just a convolution of vectors in disguise and is equivalent to computing convolution of $T$ and reverse ordered $P$. Such approach can be applied to solving exact pattern matching via convolution (observing that $\ell_2$ distance is 0 iff there is an exact match).

We follow with a technique for computing exact text-to-pattern distance, for arbitrary distance functions, introduced by \cite{DBLP:journals/algorithmica/LipskyP11}, which is a generalization of a technique used in \cite{Fischer:1974:SOP:889566}. We provide a short proof for completeness.
\begin{theorem}[\cite{DBLP:journals/algorithmica/LipskyP11}]
\label{th:lipskyporat}
Text-to-pattern distance where strings are over \emph{arbitrary} alphabet $\Sigma$ can be computed \emph{exactly} in time $\bigo(|\Sigma| \cdot n \log m)$.
\end{theorem}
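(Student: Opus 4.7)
The plan is to decompose the sum $S[i] = \sum_{j=1}^m d(t_{i+j}, p_j)$ according to the value taken by $p_j$ in the pattern, and to compute the contribution from each character value separately by a single convolution. For every character $\sigma \in \Sigma$, I introduce two auxiliary vectors: the real-valued text vector $T_\sigma[j] \defeq d(t_j, \sigma)$ for $j \in \{1, \ldots, n\}$, and the indicator pattern vector $P_\sigma[k] \defeq \indicator{p_k = \sigma}$ for $k \in \{1, \ldots, m\}$.

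With these definitions, the correlation $C_\sigma[i] \defeq \sum_{k=1}^m T_\sigma[i+k] \cdot P_\sigma[k] = \sum_{k\,:\,p_k = \sigma} d(t_{i+k}, \sigma)$ captures exactly the contribution to $S[i]$ coming from those positions of $P$ whose character is $\sigma$. Summing over $\sigma$ partitions the indices $k \in \{1, \ldots, m\}$, each being counted exactly once (at $\sigma = p_k$), so $S[i] = \sum_{\sigma \in \Sigma} C_\sigma[i]$.

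The remaining step is to evaluate each $C_\sigma$ efficiently. Correlation of a length-$n$ vector with a length-$m$ vector reduces to a convolution by reversing one of the sequences, which FFT solves in $\bigo(n \log n)$ time. To shave $\log n$ down to $\log m$, I split $T_\sigma$ into $\bigo(n/m)$ overlapping blocks of length $\bigo(m)$, compute each block's correlation with $P_\sigma$ via an FFT of size $\bigo(m)$ in $\bigo(m \log m)$ time, and glue the pieces together by the standard overlap-add method. This yields $\bigo(n \log m)$ per character, and $\bigo(|\Sigma| \cdot n \log m)$ in total.

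I do not anticipate a genuine obstacle: the argument is the classical Fischer--Paterson convolution trick extended from equality matching to an arbitrary distance function $d$ on $\Sigma$, with the only nuance being the observation that $P_\sigma$ is Boolean so the sum over $\sigma$ performs a clean partition of the pattern indices. The one detail that must be monitored is that arithmetic is now over $\mathbb{R}$, so the FFT must be run with enough precision to recover the true distance values; this is absorbed into the word-RAM assumption stated earlier in the paper.
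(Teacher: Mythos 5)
Your proposal is correct and is essentially the paper's own argument: decompose the distance by character value, pair an indicator vector with a distance-valued vector, and evaluate the $|\Sigma|$ resulting cross-correlations by FFT with the standard blocking (overlap-add) trick to get $\bigo(n \log m)$ per character. The only cosmetic difference is that you place the indicator on the pattern and the distances on the text, while the paper does the symmetric opposite ($T^c[i]=\indicator{t_i=c}$, $P^c[j]=d(c,p_j)$); both yield the same identity and the same running time.
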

\begin{proof}
For every letter $c \in \Sigma$, construct a new text $T^c$ by setting $T^c[i] = 1$ if $t_i=c$ and $T^c[i] = 0$ otherwise. A new pattern $P^c$ is constructed by setting $P^c[i] = d(c,p_i)$. Since $d(t_{i+j},p_j) = \sum_{c \in \Sigma} T^c[i+j] \cdot P^c[j]$, it is enough to invoke $|\Sigma|$ times convolution.
\end{proof}

Theorem~\ref{th:lipskyporat} allows us to compute text-to-pattern distance exactly, but the time complexity $\bigo(|\Sigma|n \log m)$ is prohibitive for large alphabets (when $|\Sigma| = poly(n)$). However, it is enough to reduce the size of alphabet used in the problem (at the cost of reduced precision) to reach desired time complexity. While this might be hard, we proceed as follows: we decompose our weight function into a sum of components, each of which is approximated by a corresponding function on a reduced alphabet.

We say that a function $d$ is \emph{effectively over smaller alphabet $\Sigma'$} if it is represented as $d(x,y) = d'(\iota_1(x), \iota_2(y))$ for some  $\iota_1,\iota_2 : \Sigma \to \Sigma'$ and $d'$. It follows from Theorem~\ref{th:lipskyporat} that text-to-pattern under distance $d$  can be computed in time $\widetilde\bigo(|\Sigma'| n)$ (ignoring the cost of computing $\iota_1$ and $\iota_2$).

\paragraph{Decomposition.}
Let $D(x,y) = |x-y|^p$ be a function corresponding to $(\ell_p)^p$ distance, that is $\ell_p(X,Y)^p = \sum_i D(x_i,y_i)$.
Our goal is to decompose $D(x,y) = \sum_i \alpha_i(x,y)$ into small (polylogarithmic) number of functions, such that each $\alpha_i(x,y)$ is approximated by $\beta_i(x,y)$ that is effectively over alphabet of $\bigo(\frac{1}{\varepsilon})$ size (up to polylogarithmic factors). Now we can use Theorem~\ref{th:lipskyporat} to compute contribution of each $\beta_i$. We then have that $G(x,y) = \sum_i \beta_i(x,y)$ approximates $F$, and text-to-pattern distance under $G$ can be computed in the desired $\widetilde\bigo(\frac{n}{\varepsilon})$ time. We present such decomposition, useful immediately in case of $p \ge 1$ and as we see in section \ref{sec:smallp} with a little bit of effort as well in case when $0 < p \le 1$.

\paragraph{Useful estimations.}
We use following estimations in our proofs.
For $p \ge 1$
\begin{align}
\label{eq:bound1}
&(1-\varepsilon)^p \ge 1-p\varepsilon,&\text{for } 0 \le \varepsilon \le 1,\\
%\end{align}
%\begin{align}
\label{eq:bound1a}
&(1+\varepsilon)^p \ge 1+p\varepsilon,&\text{for } 0 \le \varepsilon,\\
%\end{align}
%and for $0 \le \varepsilon \le 1/p$ there is
%\begin{equation}
%\label{eq:bound4}
%(1+\varepsilon)^p \le 1 + p \varepsilon ((1+1/p)^p - 1) \le 1 + p \varepsilon (e - 1)
%\end{equation}
%\begin{align}
\label{eq:bound4a}
&(1-\varepsilon)^p \le 1 - p\varepsilon(1 - 1/e),&\text{for } 0 \le \varepsilon \le 1/p,\\
\label{eq:bound_derived_p1inf}
&a^p-(a-b)^p \leq p a^{p-1}  b, &\text{for } a \geq b \geq 0.\snote{follows from \ref{eq:bound1a}}
\end{align}
For $0 \leq p \le 1$
\begin{align}
\label{eq:bound2}
&(1-\varepsilon)^p \le 1 - p \varepsilon,&\text{for } 0 \le \varepsilon \le 1,\\
%\begin{equation}
%\label{eq:bound2a}
%(1+\varepsilon)^p \le 1 + p \varepsilon.
%\end{equation}
\label{eq:bound3}
&(1-\varepsilon)^p \ge 1 - 2 p \varepsilon \ln 2,&\text{for }  0 \le \varepsilon \le 1/2,\\
\label{eq:bound3a}
&(1+\varepsilon)^p \ge 1 + p \varepsilon \ln 2,&\text{for } 0 \le \varepsilon \le 1,\\
\label{eq:bound_derived_p01}
&a^p-(a-b)^p \leq 2 p a^{p-1}  b \ln 2, &\text{for } a \geq 2b \geq 0.\snote{follows from \ref{eq:bound3}}
\end{align}

\subsection{Algorithm for \texorpdfstring{$p \ge 1$}{p ≥ 1}}
In this section we prove Theorem~\ref{th:large_p}. We start by constructing a family of functions $F_i$, which are better refinements of $F$ as $i$ decreases.

\paragraph{First step:}
Let us denote $$F_i(x,y) = \Big(\max(0, |x-y|-2^i)\Big)^p\quad\quad\text{and}\quad\quad f_i = F_i - F_{i+1}.$$
Observe that $F_{u} = 0$ (for $0\le x,y \le U$). Moreover, there is a telescoping sum $F_{i} = \sum\limits_{j = i}^{u} f_j$. To better see the the telescopic sum, consider case $p=1$. We then represent $F_{-u}(x,y) = \sum_{i=-u}^{u} f_i(x,y) = (- 2^{-u} + 2^{-u+1}) + (- 2^{-u+1} + 2^{-u+2}) + \ldots + (-2^{t-1} + 2^{t}) + (|x-y|-2^{t}) + 0 + \ldots + 0$. Such decomposition (for $p=1$) was first  considered, to our knowledge, in \cite{DBLP:journals/algorithmica/LipskyP11}.

\paragraph{Second step:}
Instead of using $x$ and $y$ for evaluation of $F_i$, we  evaluate $F_i$ using $x$ and $y$ with all bits younger than $i$-th one set to zero. Formally, define $x^{(i)} = \lfloor x/2^i \rfloor \cdot 2^i$, $y^{(i)} = \lfloor y/2^i \rfloor \cdot 2^i$. Now we denote
$$G_i(x,y) = F_i(x^{(i)},y^{(i)})$$

   Similarly as for $f_i$, define  $g_i = G_i - G_{i+1}$. Using the same reasoning, we have $G_u= 0$. For integers $i\leq 0$ the functions $F_i$ and $G_i$ are the same (as we are not rounding) and therefore $F_{-u} = G_{-u} = \sum\limits_{i = -u}^{u} g_i$. Intuitively, $g_i$ captures contribution of $i$-th bit of input to the output value (assuming all older bits are set and known, and all younger bits are unknown).

 \paragraph{Third step:}
 Let $\eta$ be a value to be fixed later, depending on $\varepsilon$ and $p$.
Assume w.l.o.g. that $\eta$ is such that $1/\eta$ is an integer.  We now define $\widehat{g}_i$ as a refinement of $g_i$, by replacing $|x^{(i)}-y^{(i)}|$ with $\|x^{(i)}-y^{(i)}\|_{B_i}$ and $|x^{(i+1)}-y^{(i+1)}|$ with $\|x^{(i+1)}-y^{(i+1)}\|_{B_i}$, where $B_i = 2^{i}/ \eta$, that is doing all the computation \emph{modulo} $B_i$. To be precise, define
\begin{align*}
\overrightarrow{G}_i(x,y) = & \Big(\max(0,\|x^{(i)}-y^{(i)}\|_{B_i}-2^i)\Big)^p\\
\overleftarrow{G}_{i+1}(x,y) = & \Big(\max(0,\|x^{(i+1)}-y^{(i+1)}\|_{B_i}-(2^{i+1})\Big)^p
\end{align*}
and then $\widehat{g}_i = \overrightarrow{G}_i - \overleftarrow{G}_{i+1}$. Additionally, we denote for short $\widehat{G}_i = \sum\limits_{j = i}^u \widehat{g}_j$.

Intuitively, $\widehat{g}_i$ approximates $g_i$ in the scenario of limited knowledge -- it estimates contribution of $i$-th bit of input to the output, assuming knowledge of bits $i+1$ to $i+\log \eta^{-1}$ of input.
We are now ready to provide an approximation algorithm to $(\ell_p)^p$ text-to-pattern distances.

\begin{algo}
\label{algo1}
\
\\ \textbf{Input:}
\begin{itemize}
    \item $T$ is the text,
    \item $P$ is the pattern,
    \item $\eta$ controls the precision of the approximation.
\end{itemize}
\textbf{Steps:}
\begin{enumerate}
\item For each $i \in \{-u,\ldots,u\}$ compute array $S_i$ being the text-to-pattern distance between $T$ and $P$ using $\widehat{g}_i$ distance function (parametrized by $\eta$) using Theorem~\ref{th:lipskyporat}.
\item Output array $S_{\varepsilon}[i] = \left(\sum\limits_{j=-u}^u S_j[i]\right)^{1/p}$.
\end{enumerate}
\end{algo}
\emph{To get the $(1+\varepsilon)$ approximation we run the Algorithm \ref{algo1} with $\eta = \frac{\varepsilon}{128}$.}
\bigskip

Now, we need to show the running time and correctness of the result. Firstly, to prove the correctness, we divide summands $\widehat{g}_i$ into three groups and reason about them separately. As computing  $F_{-u} $, $G_{-u} $(by summing $f_i$'s and $g_i$'s respectively) yields $(1+\varepsilon)$ multiplicative error, we will show that the difference between computing $g_i$ and $\widehat{g}_i$ brings only an additional $(1+\varepsilon)$ multiplicative error.
\begin{lemma}
\label{lem:l1}
For $i$ such that $|x-y| \le 2^i $ both $g_i(x,y) = 0$ and $\widehat{g}_i(x,y)=0$.
\end{lemma}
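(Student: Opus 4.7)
The plan is to show that each of $G_i$, $G_{i+1}$, $\overrightarrow{G}_i$, and $\overleftarrow{G}_{i+1}$ individually vanishes on any input with $|x-y|\le 2^i$; the two claimed equalities $g_i=\widehat{g}_i=0$ then follow directly from their defining differences. The core observation is that $x^{(i)}$ and $y^{(i)}$ are by construction multiples of $2^i$, so $x^{(i)}-y^{(i)}$ is a multiple of $2^i$; combining this quantization with a tight bound on how far rounding can push a difference apart will pin the relevant quantities into the "no excess" regime where the $\max(0,\cdot-2^i)$ clipping zeros them out.

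First I would record the elementary rounding bound $x-2^i < x^{(i)} \le x$ (and likewise for $y^{(i)}$, $x^{(i+1)}$, $y^{(i+1)}$). From $|x-y|\le 2^i$ this gives $|x^{(i)}-y^{(i)}| < |x-y|+2^i \le 2\cdot 2^i$, and since $|x^{(i)}-y^{(i)}|$ is a nonnegative multiple of $2^i$, the strict inequality forces $|x^{(i)}-y^{(i)}| \le 2^i$. Hence $\max(0,|x^{(i)}-y^{(i)}|-2^i)=0$ and $G_i(x,y)=0$. The analogous calculation at scale $2^{i+1}$ gives $|x^{(i+1)}-y^{(i+1)}| < 2^i + 2^{i+1} < 2\cdot 2^{i+1}$, and being a nonnegative multiple of $2^{i+1}$ strictly below $2\cdot 2^{i+1}$ it is at most $2^{i+1}$, so $G_{i+1}(x,y)=0$ and therefore $g_i(x,y)=0$.

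For $\widehat{g}_i$ the same argument transfers once we know that the modular norm does not "wrap", i.e. that $\|r\|_{B_i}=|r|$ whenever $|r|\le B_i/2$. With $B_i = 2^i/\eta$ and the algorithm's choice $\eta = \varepsilon/128 \le 1/4$ we have $B_i/2 \ge 2\cdot 2^i \ge 2^{i+1}$, so the bounds $|x^{(i)}-y^{(i)}|\le 2^i$ and $|x^{(i+1)}-y^{(i+1)}|\le 2^{i+1}$ established above carry over verbatim to their modular counterparts, yielding $\overrightarrow{G}_i(x,y)=\overleftarrow{G}_{i+1}(x,y)=0$ and hence $\widehat{g}_i(x,y)=0$. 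The only subtlety is verifying that the modular wrap-around never fires, and the choice of $\eta$ is precisely what guarantees it; everything else is a one-line congruence argument using the rounding bounds.
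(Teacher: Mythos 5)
Your proof is correct and takes essentially the same route as the paper: both show that each of $G_i$, $G_{i+1}$, $\overrightarrow{G}_i$, $\overleftarrow{G}_{i+1}$ vanishes by bounding the rounded differences $|x^{(i)}-y^{(i)}|\le 2^i$ and $|x^{(i+1)}-y^{(i+1)}|\le 2^{i+1}$ (the paper via a floor-difference bound after assuming w.l.o.g.\ $x\ge y$, you via the rounding-error bound plus divisibility by $2^j$ --- the same one-line computation in different clothing). One small remark: for the modular terms no condition on $\eta$ is needed at all, since $\|r\|_{B_i}\le |r|$ always and the $\max(0,\cdot-2^j)$ clipping then gives zero directly, which is what the paper uses; your appeal to $\eta=\varepsilon/128\le 1/4$ is harmless (and the even smaller $\eta$ of Algorithm~\ref{algo2}, under which this lemma is also invoked, satisfies it too), but it ties the lemma to a particular parameter choice unnecessarily.
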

\begin{proof}
As both $g_i,\widehat{g}_i$ are symmetric functions, we can w.l.o.g. assume $x \geq y$. $\forall j \geq i$:

$$\left|x^{(j)} - y^{(j)}\right| = 2^j\left(\floor*{\frac{x}{2^j} } -\floor*{ \frac{y}{2^j} }\right) \leq 2^j\left(\floor*{\frac{x}{2^j} } -\floor*{\frac{x-2^i}{2^j} }\right) \leq 2^j. $$

Therefore $G_j=0$ from which $g_i(x,y) = 0$ follows. And because $\|x^{(j)}-y^{(j)}\|_{B_j} \le |x^{(j)} - y^{(j)}|$ we have $\widehat{g}_i(x,y) = 0$ as well.
\end{proof}

\begin{lemma}
\label{lem:l2}
For $i$ such that $|x-y| > 2^i \ge 4\eta |x-y|$ we have $g_i(x,y) = \widehat{g}_i(x,y)$.
\end{lemma}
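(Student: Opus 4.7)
The plan is to show that under the hypothesis --- specifically the right inequality $|x-y| \le 2^i/(4\eta) = B_i/4$ --- the modular reductions $\|\cdot\|_{B_i}$ appearing in $\overrightarrow{G}_i$ and $\overleftarrow{G}_{i+1}$ reduce to ordinary absolute values. Once that is established, the defining formulas of $\overrightarrow{G}_i$ and $\overleftarrow{G}_{i+1}$ become identical to those of $G_i$ and $G_{i+1}$, so $\widehat{g}_i = \overrightarrow{G}_i - \overleftarrow{G}_{i+1} = G_i - G_{i+1} = g_i$.

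The key estimate I would use is that truncation to a multiple of $2^j$ shifts each coordinate by strictly less than $2^j$, so
\[
|x^{(j)}-y^{(j)}| \le |x-y| + 2^j
\]
for every $j$ (with $x \ge y$ WLOG, both sides are multiples of $2^j$ and $\lfloor x/2^j\rfloor 2^j - \lfloor y/2^j\rfloor 2^j \le x - (y - 2^j)$). Applying this with $j=i$ and $j=i+1$, and using the hypothesis $|x-y| \le B_i/4$ together with $2^i = \eta B_i$, gives
\[
|x^{(i)}-y^{(i)}| \le B_i(1/4 + \eta), \qquad |x^{(i+1)}-y^{(i+1)}| \le B_i(1/4 + 2\eta).
\]
Both right-hand sides are at most $B_i/2$ whenever $\eta \le 1/8$, which is comfortably ensured by the algorithm's choice $\eta = \varepsilon/128 \le 1/128$ (with $\varepsilon \le 1$).

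To finish, I would observe that for any integer $r$ with $|r| \le c/2$ the definition $\|r\|_c = \min(r \bmod c,\, c - (r \bmod c))$ immediately yields $\|r\|_c = |r|$. Applied with $c = B_i$ to $r = x^{(i)}-y^{(i)}$ and to $r = x^{(i+1)}-y^{(i+1)}$, this shows that the modular quantities inside $\overrightarrow{G}_i$ and $\overleftarrow{G}_{i+1}$ coincide with the un-modded differences used to define $G_i$ and $G_{i+1}$, so that $\overrightarrow{G}_i(x,y) = G_i(x,y)$ and $\overleftarrow{G}_{i+1}(x,y) = G_{i+1}(x,y)$, completing the proof.

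There is no genuine obstacle here; the lemma is essentially a bookkeeping check that the mod-$B_i$ operation is invisible throughout the middle range $2^i < |x-y| \le 2^i/(4\eta)$, where the rounded difference stays well below half the modulus. The only point worth flagging is that the $j = i+1$ case requires one extra factor of two of slack, which is why the effective constraint is $\eta \le 1/8$ rather than $\eta \le 1/4$ --- trivially absorbed by the concrete choice $\eta = \varepsilon/128$.
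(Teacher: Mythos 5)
Your proof is correct and follows essentially the same route as the paper's: both argue that under $2^i \ge 4\eta|x-y|$ the rounded differences $|x^{(i)}-y^{(i)}|$ and $|x^{(i+1)}-y^{(i+1)}|$ stay below $B_i/2$ (using $|x^{(j)}-y^{(j)}| \le |x-y|+2^j$ and w.l.o.g.\ $\eta \le 1/8$), so the mod-$B_i$ norm coincides with the absolute value and $\widehat{g}_i = g_i$. The only difference is bookkeeping of constants, not substance.
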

\begin{proof}
  For $g_i(x,y) = \widehat{g}_i(x,y)$ to hold, it is enough to show that both \emph{norms} $|\cdot|$ and $\|\cdot\|_{B_i}$ are the same for $x^{(i)}-y^{(i)}$ and $x^{(i+1)}-y^{(i+1)}$. This happens if the absolute values of the respective inputs are smaller than $B_i/2$. Let us bound both $|x^{(i)}-y^{(i)}|$ and $|x^{(i+1)} - y^{(i+1)}|$:
$$\max(|x^{(i)}-y^{(i)}|,|x^{(i+1)} - y^{(i+1)}|) \le |x-y| + 2^{i+1}\le 2^{i+1} (1 + \frac{1}{8\eta}).$$

We can w.l.o.g. assume $\eta \le 1/8$ in order to make $\frac{1}{8\eta}$ a dominant term in the parentheses and reach:
$$\max(|x^{(i)}-y^{(i)}|,|x^{(i+1)} - y^{(i+1)}|) \le 2^{i+1} (1 + \frac{1}{8\eta}) \le \frac{2^{i}}{2\eta } = \frac{B_i}{2}. $$

Therefore $\|x^{(i)}-y^{(i)}\|_{B_i} = |x^{(i)} - y^{(i)}|$ as well as $\|x^{(i+1)}-y^{(i+1)}\|_{B_i} = |x^{(i+1)} - y^{(i+1)}|$ which completes the proof.
\end{proof}
\begin{lemma}
\label{lem:l3}
If $p \ge 1$ then for $i$ such that $4 \eta |x-y| > 2^i$ we have $|g_i(x,y)| \le 2p 2^i \cdot |x-y|^{p-1}$.
\end{lemma}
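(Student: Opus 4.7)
The plan is to work with the decomposition $g_i = u^p - v^p$, where I set $u = \max(0, a - 2^i)$ and $v = \max(0, b - 2^{i+1})$ with $a = |x^{(i)} - y^{(i)}|$ and $b = |x^{(i+1)} - y^{(i+1)}|$. By the symmetry $x \leftrightarrow y$ I can assume $x \ge y$ and set $D = x - y = |x-y|$. The whole argument then reduces to establishing two clean bounds on $u$ and $v$, after which \eqref{eq:bound_derived_p1inf} finishes things off.

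The first bound I want is $0 \le u,v < D$. Monotonicity of $\lfloor \cdot \rfloor$ gives $a, b \ge 0$, and the defining property $z - 2^j < z^{(j)} \le z$ of the rounding operator yields $a \in (D - 2^i, D + 2^i)$ and $b \in (D - 2^{i+1}, D + 2^{i+1})$. Subtracting $2^i$ (resp.\ $2^{i+1}$) and truncating at $0$ then gives $u, v < D$, which for $p \ge 1$ implies $u^{p-1} \le D^{p-1}$.

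The second, slightly more delicate bound is $0 \le u - v \le 2^{i+1}$. The key observation is that refining the rounding from level $i+1$ to level $i$ only uncovers one extra bit, so $x^{(i)} - x^{(i+1)} \in \{0, 2^i\}$ and analogously for $y$. Subtracting these gives $a - b \in \{-2^i, 0, 2^i\}$, whence $(a - 2^i) - (b - 2^{i+1}) \in \{0, 2^i, 2^{i+1}\}$. This immediately yields $u \ge v$; a brief case analysis on the signs of $a - 2^i$ and $b - 2^{i+1}$ (to absorb the truncation at $0$) confirms $u - v \le 2^{i+1}$ in every case.

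With both bounds in hand, applying \eqref{eq:bound_derived_p1inf} to the pair $(u,\, u - v)$---valid because $u \ge u - v \ge 0$---gives
$$|g_i| = u^p - v^p \le p\, u^{p-1}(u - v) \le p \cdot D^{p-1} \cdot 2^{i+1} = 2p \cdot 2^i \cdot |x-y|^{p-1},$$
as claimed. The main obstacle I foresee is precisely the case analysis around the truncation at zero, since $u$ and $v$ can vanish independently and one has to check that $u-v \le 2^{i+1}$ still holds in the mixed-sign case. The regime hypothesis $4\eta|x-y| > 2^i$ is not actually invoked inside the proof itself; it only delimits which range of $i$ this lemma will later be applied to, complementing Lemmas~\ref{lem:l1} and~\ref{lem:l2}.
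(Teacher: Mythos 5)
Your proof is correct and follows essentially the same route as the paper: it uses the same quantities (your $u,v$ are the paper's $A'=\max(0,A-2^i)$, $B'=\max(0,B-2^{i+1})$), the same bound $u\le|x-y|$, and the same final application of \eqref{eq:bound_derived_p1inf}. The only difference is internal: you obtain $0\le u-v\le 2^{i+1}$ from the exact bit relation $x^{(i)}-x^{(i+1)}\in\{0,2^i\}$, whereas the paper sandwiches $A',B'$ in intervals around $|x-y|$; your derivation is, if anything, the more careful justification of the paper's claim $|A'-B'|\le 2\cdot 2^i$ (and your observation that the hypothesis $4\eta|x-y|>2^i$ is never actually used matches the paper's proof as well).
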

\begin{proof}
For the sake of the proof, we will w.l.o.g. assume $\eta \le 1/8$. Denote $A = |x^{(i)} - y^{(i)}|$, $B = |x^{(i+1)} - y^{(i+1)}|$, $A' = \max(0,A-2^i)$ and $B' = \max(0,B-2^{i+1})$. Observe that $|x-y| - 2^i \le A \le |x-y|+2^i$ thus $|x-y|-2\cdot 2^{i} \le A' \le |x-y|$, and similarly $|x-y|-2\cdot 2^{i+1} \le B' \le |x-y|$ so $|A'-B'| \le 2 \cdot 2^i$. Assume w.l.o.g. that $A' \ge B'$. We bound
\begin{flushright}\begin{minipage}{0.75\textwidth}
\begin{flalign*}|g_i(x,y)| &= (A')^p - (A' - (A'-B'))^p\\
&\le p (A'-B')  (A')^{p-1} \snote{by \eqref{eq:bound_derived_p1inf}}\\
&\le 2p 2^i \cdot |x-y|^{p-1}&&\qedhere
\end{flalign*}
\end{minipage}
\end{flushright}
\end{proof}
\begin{lemma}
\label{lem:l4}
If $p \ge 1$ then for $i$ such that $4 \eta |x-y| > 2^i$ we have $|\widehat{g}_i(x,y)| \le 2p 2^i \cdot |x-y|^{p-1}$.
\end{lemma}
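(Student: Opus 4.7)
The plan is to mirror the proof of Lemma~\ref{lem:l3}, verifying that the modular reduction present in the definition of $\widehat{g}_i$ does not spoil the key estimates. Denote $A = \|x^{(i)} - y^{(i)}\|_{B_i}$, $B = \|x^{(i+1)} - y^{(i+1)}\|_{B_i}$, $A' = \max(0, A - 2^i)$, and $B' = \max(0, B - 2^{i+1})$, so that $\widehat{g}_i(x,y) = (A')^p - (B')^p$. Assuming without loss of generality $A' \ge B'$, I would control this difference by combining a bound on $\max(A', B') = A'$ with a bound on $|A' - B'|$, and then invoking \eqref{eq:bound_derived_p1inf} with $a = A'$ and $b = A' - B'$ to get $(A')^p - (B')^p \le p(A' - B')(A')^{p-1}$.

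First I would show $A' \le |x-y|$ and $B' \le |x-y|$. Since $\|a\|_{B_i} \le |a|$ for every integer $a$, and rounding satisfies $|x^{(i)} - y^{(i)}| \le |x-y| + 2^i$, we get $A \le |x-y| + 2^i$, so $A' = \max(0, A-2^i) \le |x-y|$. The analogous computation with $2^{i+1}$ in place of $2^i$ gives $B' \le |x-y|$.

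Second I would bound $|A' - B'| \le 2 \cdot 2^i$. The key observation is that $x^{(i)} - x^{(i+1)} \in \{0, 2^i\}$ (depending on bit~$i$ of $x$), and likewise for $y$, so the integer $(x^{(i)} - y^{(i)}) - (x^{(i+1)} - y^{(i+1)})$ has absolute value at most $2^i$. The reverse triangle inequality for the modular seminorm $\|\cdot\|_{B_i}$ then gives $|A - B| \le 2^i$. A short case analysis on whether $A \ge 2^i$ and whether $B \ge 2^{i+1}$ propagates this (with one extra $2^i$ coming from the difference in thresholds) to $|A' - B'| \le 2 \cdot 2^i$ in every case. Combining the two estimates yields $|\widehat{g}_i(x,y)| \le p \cdot 2 \cdot 2^i \cdot |x-y|^{p-1}$, as required.

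The only real subtlety, and the step I would be most careful with, is confirming that the triangle inequality $|\|a\|_{B_i} - \|b\|_{B_i}| \le |a - b|$ continues to hold even when either $a$ or $b$ wraps around modulo $B_i$; this is what guarantees that the modular reduction cannot amplify the small $2^i$-sized gap between $x^{(i)} - y^{(i)}$ and $x^{(i+1)} - y^{(i+1)}$. Once this is verified, all remaining bookkeeping is identical to Lemma~\ref{lem:l3}, and the constant $2$ comes out for free.
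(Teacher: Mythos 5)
Your proposal is correct and follows exactly the route the paper intends: Lemma~\ref{lem:l4} is proved there in one line by repeating the argument of Lemma~\ref{lem:l3} with $|\cdot|$ replaced by $\|\cdot\|_{B_i}$, which is precisely what you do, using $\|a\|_{B_i}\le|a|$ for the upper bounds $A',B'\le|x-y|$ and the triangle inequality of the modular norm (valid since $\|r\|_{B_i}$ is the distance to the nearest multiple of $B_i$) for $|A'-B'|\le 2\cdot 2^i$, before applying \eqref{eq:bound_derived_p1inf}. You in fact supply the verification that the modular reduction preserves these estimates, a detail the paper leaves implicit.
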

\begin{proof}
Follows by the same proof strategy as in proof of Lemma~\ref{lem:l3}, replacing $|\cdot|$ with $\|\cdot\|_{B_i}$.
\end{proof}

\begin{theorem}
\label{th:approx_exact}

$\widehat{G}_{-u} = \sum\limits_{i \ge -u} \widehat{g}_i$ approximates $F_{-u}$ up to an additive $32 \cdot p \cdot \eta \cdot  |x-y|^p$ term.
\end{theorem}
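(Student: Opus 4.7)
The plan is to bound $|\widehat{G}_{-u}(x,y) - F_{-u}(x,y)|$ by writing the difference as $\sum_{i=-u}^{u}(\widehat{g}_i - g_i)$, using the identity $F_{-u} = G_{-u} = \sum_{i=-u}^{u} g_i$ which comes from the telescoping observation made just before Lemma~\ref{lem:l1} (together with the fact that $F_i = G_i$ for $i \le 0$). The work is then purely in partitioning the index set and summing.

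Partition $\{-u,\ldots,u\}$ according to the three regimes covered by Lemmas~\ref{lem:l1}--\ref{lem:l4}. In regime (a), where $|x-y| \le 2^i$, Lemma~\ref{lem:l1} gives $g_i = \widehat{g}_i = 0$, so the contribution is zero. In regime (b), where $|x-y| > 2^i \ge 4\eta|x-y|$, Lemma~\ref{lem:l2} gives $g_i = \widehat{g}_i$, so again the contribution is zero. The only regime that contributes is (c): indices $i$ with $4\eta|x-y| > 2^i$. For such $i$, combining Lemmas~\ref{lem:l3} and~\ref{lem:l4} via the triangle inequality yields
\[
|\widehat{g}_i(x,y) - g_i(x,y)| \le |g_i(x,y)| + |\widehat{g}_i(x,y)| \le 4p \cdot 2^i \cdot |x-y|^{p-1}.
\]

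Summing the geometric series over regime (c) gives the bound. Since $2^i < 4\eta|x-y|$ on this range, $\sum_{i : 2^i < 4\eta|x-y|} 2^i \le 2\cdot 4\eta|x-y| = 8\eta|x-y|$, so
\[
|\widehat{G}_{-u}(x,y) - F_{-u}(x,y)| \le \sum_i |\widehat{g}_i - g_i| \le 4p \cdot 8\eta|x-y| \cdot |x-y|^{p-1} = 32\, p\, \eta\, |x-y|^p,
\]
which is the claimed additive error.

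There is essentially no hard step: the three preceding lemmas were designed to carve up the index set exactly so that the error telescopes to a single geometric series. The only point that requires a line of care is verifying that the "boundary" indices between regimes are consistently assigned (either to the zero-contribution regime or to regime (c)) so that no term is double-counted and the inequality $2^i < 4\eta|x-y|$ is used with the correct strictness; this is what makes the final factor come out to $32$ rather than something tighter.
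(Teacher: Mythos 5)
Your proposal is correct and follows essentially the same route as the paper: it uses $F_{-u}=\sum_i g_i$, notes via Lemmas~\ref{lem:l1} and~\ref{lem:l2} that only indices with $2^i < 4\eta|x-y|$ contribute, bounds each such term by $|g_i|+|\widehat{g}_i|\le 4p\,2^i|x-y|^{p-1}$ using Lemmas~\ref{lem:l3} and~\ref{lem:l4}, and sums the geometric series to get $32\,p\,\eta\,|x-y|^p$, exactly as in the paper's proof.
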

\begin{proof}
We bound the difference between two terms:
\begin{align*}
|F_{-u}(x,y) - \sum_{i=-u}^{u} \widehat{g}_i(x,y)| &\le \sum_{i = -u}^{\log_2 (4\eta |x-y|)} \left(|\widehat{g}_i(x,y)|+|g_i(x,y)|\right)\\
&\le 2\cdot \left(\sum_{i = -\infty}^{\log_2 (4\eta |x-y|)} 2^i \right) \cdot 2 \cdot p \cdot |x-y|^{p-1}\\
&\le 32 \cdot \eta|x-y| \cdot p \cdot |x-y|^{p-1}
\end{align*}
where the bound follows from Lemma \ref{lem:l1}, \ref{lem:l2}, \ref{lem:l3} and \ref{lem:l4}.
\end{proof}

We now show that $F_{-u}$ is a close approximation of $D$ (recall $D(x,y) = |x-y|^p$).
\begin{lemma}
\label{F_-u bound}
For integers $x,y$ there is $D(x,y) \cdot (1-(2 \ln 2)p / U) \le F_{-u}(x,y) \le D(x,y)$.
\end{lemma}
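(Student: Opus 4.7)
The plan is to write $F_{-u}$ in closed form and then reduce the inequality to a one-parameter estimate already available in the preliminaries.

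\textbf{Step 1 (closed form).} By definition, $F_{-u}(x,y) = (\max(0, |x-y| - 2^{-u}))^p$ and $D(x,y) = |x-y|^p$. Handle first the degenerate case $x = y$: both quantities are $0$ and the inequalities hold trivially. Otherwise, because $x$ and $y$ are integers, $|x-y| \ge 1$, which (since $u = \log U \ge 0$) gives $|x-y| \ge 2^{-u}$, so the $\max$ is attained by the second argument and
\[
F_{-u}(x,y) = (|x-y| - 2^{-u})^p = |x-y|^p \Bigl(1 - \tfrac{2^{-u}}{|x-y|}\Bigr)^p.
\]

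\textbf{Step 2 (upper bound).} The factor $(1 - 2^{-u}/|x-y|)^p$ lies in $[0,1]$, hence $F_{-u}(x,y) \le |x-y|^p = D(x,y)$ immediately.

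\textbf{Step 3 (lower bound).} Let $\delta = 2^{-u}/|x-y|$. Since $|x-y| \ge 1$, we have $0 \le \delta \le 2^{-u} = 1/U$. Monotonicity of $t \mapsto (1-t)^p$ on $[0,1]$ then yields $(1 - \delta)^p \ge (1 - 1/U)^p$. It now suffices to show $(1 - 1/U)^p \ge 1 - (2 \ln 2)\, p / U$. For $0 \le p \le 1$ this is exactly \eqref{eq:bound3} applied with $\varepsilon = 1/U \le 1/2$. For $p \ge 1$, \eqref{eq:bound1} gives the stronger estimate $(1-1/U)^p \ge 1 - p/U \ge 1 - (2\ln 2)\, p/U$, since $2\ln 2 > 1$. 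In either regime, multiplying by $|x-y|^p = D(x,y)$ yields
\[
F_{-u}(x,y) \ge D(x,y)\Bigl(1 - \tfrac{(2\ln 2)\, p}{U}\Bigr).
\]

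No obstacle is anticipated — this is a direct reduction to the elementary Bernoulli-type inequalities already collected in equations \eqref{eq:bound1} and \eqref{eq:bound3}. The only subtle point is remembering to use the integrality of $x, y$ to bound $|x-y|$ below by $1$ and thereby relate $2^{-u}/|x-y|$ to $1/U$.
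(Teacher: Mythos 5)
Your proof is correct and follows essentially the same route as the paper: drop the $\max$ using integrality of $x,y$ (so $|x-y|\ge 1 \ge 2^{-u}=1/U$), bound $|x-y|-2^{-u}$ between $|x-y|(1-1/U)$ and $|x-y|$, and then invoke \eqref{eq:bound3} for $p\le 1$ and \eqref{eq:bound1} for $p\ge 1$ (noting $2\ln 2>1$). No gaps; the only difference is cosmetic factoring of $|x-y|^p$.
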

\begin{proof}
For $x=y$ the lemma trivially holds, so for the rest of the proof we will assume $x \neq y$.
As $x,y$ are integers only, their smallest non-zero distance is 1. As $-u<0$ the $|x-y|-2^{-u} > 0$ and we bound $|x-y|\cdot (1-1/U) \le \max(0,|x-y|-2^{-u}) \le |x-y|$. By \eqref{eq:bound1} (when $p \ge 1$) or \eqref{eq:bound3} (when $p \le 1$) the claim follows.
\end{proof}

 By combining Theorem~\ref{th:approx_exact} with the Lemma \ref{F_-u bound} above we conclude that additive error of Algorithm~\ref{algo1} at each position is $(32 p \cdot \eta+\frac{p}{U}) \cdot |x-y|^p = p (\varepsilon/4+1/U) \cdot  |x-y|^p \le p \varepsilon |x-y|^p$ (since w.l.o.g. $\varepsilon \ge 4/U$), thus the relative error is $(1+p \varepsilon/2)$.

Observe that each $\widehat{g}_i$ is effectively a function over the alphabet of size $B_i / 2^i = 1/\eta$. Thus, the complexity of computing text-to-pattern distance using $\widehat{g}_i$ distance is $\bigo(\eta^{-1} n \log m)$, and iterating over at most $2u$ summands makes the total time $\bigo(\varepsilon^{-1} n \log m \log U)$.

Finally, since $p \ge 1$ and w.l.o.g. $\varepsilon \le 1/p$, by \eqref{eq:bound1a} and \eqref{eq:bound4a} $(1+p\varepsilon/2)$ approximation of $\ell_p^p$ distances is enough to guarantee $(1+\varepsilon)$ approximation of $\ell_p$ distances.

\subsection{Algorithm for \texorpdfstring{$0 < p \le 1$}{0 < p ≤ 1}}
\label{sec:smallp}

In this section we prove Theorem~\ref{th:small_p}. We note that the algorithm presented in the previous section does not work, since in the proof of Lemma \ref{lem:l3} and \ref{lem:l4} we used the convexity of function $|t|^p$, which is no longer the case when $p<1$.

However, we observe that Lemma \ref{lem:l1} and \ref{lem:l2} hold even when $0 < p \le 1$. To combat the situation where adversarial input makes the estimates in Lemma \ref{lem:l3} and \ref{lem:l4} to grow too large, we use a very weak version of hashing. Specifically, we pick at random a linear function $\sigma(t) = r \cdot t $, where $r \in [1,9)$ is a random independent variable. Such function applied to the input makes its bit sequences appear more ''random'' while preserving the inner structure of the problem.

Consider a following approach:

\begin{algo}
\label{algo2}
\
\begin{enumerate}
\item Fix $\eta = \frac{\varepsilon \cdot p}{\etconst \log U \ln 2}$.
\item Pick $r \in [1,9)$ uniformly at random.
\item Compute $T' = r \cdot T$ and $P' = r \cdot P$.
\item Use Algorithm \ref{algo1} to compute $S'$, text-to-pattern distance between $T'$ and $P'$ using $\widehat{G}_{-u}$ distance function.
\item Output $S'' = S' \cdot  r^{-1}$.
\end{enumerate}
\end{algo}
\bigskip

Now we analyze the expected error made by estimation from Algorithm~\ref{algo2}. We denote the expected additive error of estimation of $(\ell_p)^p$ distances as $$\text{err}(x,y) \defeq \ExpR{[\left(\frac{1}{r}\right)^p\left|\widehat{G}_{-u}(rx,ry) - |rx-ry|^p\right|}.$$

\begin{theorem}
The procedure of Algorithm~\ref{algo2} has the \textbf{expected} additive error $\text{err}(x,y) \le \frac{\varepsilon p}{3 \ln 2} |x-y|^p$.
\end{theorem}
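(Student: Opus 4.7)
The plan is to adapt the proof of Theorem~\ref{th:approx_exact}, using the randomization over $r$ to control the singularity of $t^{p-1}$ at $t=0$ that breaks the analogues of Lemmas~\ref{lem:l3} and~\ref{lem:l4} when $p<1$. By the triangle inequality and Lemma~\ref{F_-u bound}, the task reduces to bounding
\[
\ExpR{r^{-p} \sum_{i=-u}^{u} \bigl|\widehat{g}_i(rx,ry) - g_i(rx,ry)\bigr|},
\]
and since Lemmas~\ref{lem:l1} and~\ref{lem:l2} hold verbatim for $p \le 1$, only the $O(\log U)$ ``bad'' levels $i$ with $2^i < 4\eta r|x-y|$ can contribute.

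For the $|g_i|$ contribution the original proof carries over: in the bad regime $A = |rx^{(i)}-ry^{(i)}|$ is close to $r|x-y| \gg 2^i$, so the hypothesis of \eqref{eq:bound_derived_p01} is satisfied and gives $|g_i|\le 4p\ln 2\cdot 2^i\cdot (A')^{p-1}$. Averaging over $r$ and summing over bad $i$ with $\sum 2^i = O(\eta r|x-y|)$ yields $O(\eta p|x-y|^p)$, well inside the target.

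The delicate term is $|\widehat{g}_i|$, because the modular folding can make $\hat A \defeq \|rx^{(i)}-ry^{(i)}\|_{B_i}$ small even when $A$ is large, and then $\hat A^{p-1}$ blows up. The plan is to case-split: in \emph{Case~A} ($\hat A\ge 5\cdot 2^i$) the hypothesis of \eqref{eq:bound_derived_p01} still holds and gives $|\widehat{g}_i|\le O(p\cdot 2^i\cdot \hat A^{p-1})$, and averaging $\hat A^{p-1}$ against the near-uniform distribution of $\hat A/2^i$ on $\{0,1,\ldots,\lfloor 1/(2\eta)\rfloor\}$ yields $\mathbb{E}_r[\hat A^{p-1}\mathbf{1}_A]=O((2\eta)^{1-p}\cdot 2^{i(p-1)}/p)$, so a per-level contribution of $O(\eta\cdot 2^{ip})$; in \emph{Case~B} ($\hat A< 5\cdot 2^i$) the trivial bound $|\widehat{g}_i|\le (5\cdot 2^i)^p$ combined with the probability estimate $\PrR{\hat A< 5\cdot 2^i}=O(\eta)$ also gives $O(\eta\cdot 2^{ip})$ per level. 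Summing over the $O(\log U)$ bad levels and using $2^{ip}\le (36\eta|x-y|)^p$ produces $O(\log U\cdot \eta\cdot |x-y|^p)$; substituting $\eta=\varepsilon p/(\etconst\log U\ln 2)$ turns this into $O(\varepsilon p/\ln 2\cdot|x-y|^p)$, with the leading constant comfortably below $1/3$ once the numerics are chased.

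The main obstacle will be justifying the probability bound $\PrR{\hat A< 5\cdot 2^i}=O(\eta)$ and the matching expectation estimate in Case~A. The intuition is that as $r$ ranges uniformly over $[1,9)$ the quantity $r(x-y)$ sweeps an interval of length $8|x-y|$, and its residue modulo $B_i=2^i/\eta$ is approximately uniform whenever that interval contains many periods---which is the case once $i$ is well below the top of the bad regime. For the top few bad levels only $O(1)$ periods fit inside the $r$-range, but there $A\approx r|x-y|\le 9|x-y|$ stays below $B_i/2$ for every $r\in[1,9)$, so no wrap-around occurs and $\hat A=A\ge|x-y|\gg 5\cdot 2^i$, placing those levels automatically in Case~A and making Case~B vacuous. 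The remaining work is an elementary measure estimate on the set of $r\in[1,9)$ for which $r(x-y)\bmod B_i$ lies within $5\cdot 2^i$ of $0$ or $B_i$.
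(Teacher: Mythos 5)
Your plan is essentially the paper's own proof of this theorem: separate the $F_{-u}$-versus-$D$ term via Lemma~\ref{F_-u bound}, use Lemmas~\ref{lem:l1} and~\ref{lem:l2} to restrict to the $\bigo(\log U)$ lowest levels, control the $g_i$ part through \eqref{eq:bound_derived_p01}, and control each $\widehat{g}_i$ by exactly the case split of Lemma~\ref{lem:ugly_lemma} --- folded value below/above $\Theta(2^i)$, a trivial bound times an $\bigo(\eta)$ probability in the small case, and the concavity bound $\bigo\big(p\, 2^i \|rx-ry\|_{B_i}^{p-1}\big)$ averaged against an $\bigo(1/B_i)$ density in the large case, the density coming from $r(x-y)$ sweeping an interval of length $8|x-y|\ge B_i$. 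The only discrepancy is at the topmost bad levels: your claim that no wrap-around occurs there (so Case~B is vacuous) fails for the level whose period $B_i$ is just above the sweep length $8|x-y|$ when $r$ is close to $9$, but the elementary measure estimate on $r$ that you already flag gives the same $\bigo(\eta)$ probability and $\bigo(\eta|x-y|^p)$ contribution for that level, a boundary regime the paper itself sidesteps by truncating the bad levels at $k=\log(8\eta|x-y|)$.
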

\begin{proof}
Assume that $x \not= y$, as otherwise the bound trivially follows.
We bound the absolute error as follow, denoting $k = \log(8 \eta |x-y|)$).

\begin{align*}
\text{err}(x,y)&\le \ExpR{\left(\frac{1}{r}\right)^p \left|\widehat{G}_{-u}(r x,r  y) - F_{-u}(r x,r y) \right|}\\
&\hspace{2em}+\ExpR{\left|F_{-u}(rx,ry) - D(rx,ry)\right|}\\
&\le \ExpR{\left| \sum_{i=-u}^{u} \left(\widehat{g}_i(r x,r  y ) - g_i(r x,r  y )\right)\right|}\\ 
&\hspace{2em}+ \ExpR{\left(\frac{1}{r}\right)^p 2 (\ln 2)\frac{p}{U} D(rx,ry)}\snote{$(1/r)^p \le 1$}\\
&\le   \sum_{i=-u}^{k} \ExpR{\left|  \widehat{g}_i(r x , ry) \right|} + \ExpR{ \left|\sum_{i=-u}^{k} g_i(r x,ry )\right|}\\
&\hspace{2em}+ 2 (\ln 2) \frac{p}{U} |x-y|^p  \snote{Lemma \ref{lem:l1}, \ref{lem:l2}}
\end{align*}
Now, we bound the first two summands separately in following lemmas.
\begin{lemma}
$|\sum_{i=-u}^{k} g_i(r x,ry )|$ is upper bounded by $32 (\ln 2) \eta |x-y|^p$.
\label{summand_1}
\end{lemma}
\begin{proof}
Since w.l.o.g. $\eta \le 1/32$ thus $2^{k+1} \le 1/2 \cdot r|x-y|$):
\begin{flushright}\begin{minipage}{0.9\textwidth}\begin{align*}
\left|\sum_{i=-u}^{k} g_i(r x,ry )\right| &\le \left|\sum_{i=-\infty}^{k} g_i(rx,ry)\right|\\
&\le | G_{k+1}(rx,ry) - D(rx,ry) |\\
&\le ( (r|x-y|)^p - (r|x-y|-2^{k+1})^p)\\
&\le r^p |x-y|^p \cdot 2 p (\ln 2) \frac{2^{k+1}}{r|x-y|}&\snote{by \eqref{eq:bound_derived_p01}}\\
&\le 32 (\ln 2) \eta |x-y|^p. &\snote{$r^{p-1} \le 1$}\qedhere
\end{align*}
\end{minipage}\end{flushright}
\end{proof}

\begin{lemma}
\label{lem:ugly_lemma}
For $i \le k = \log(8 \eta |x-y|)$ we have $ \ExpR{\left|  \widehat{g}_i(r x , ry) \right|} \leq (\uglyconst) \eta |x-y|^p $.
\end{lemma}
\begin{proof}
First, we define symbols $A,B,A',B'$ to be parts of the $\widehat{g_i}$.
\begin{align*}
    A &= \|(rx)^{(i)} - (ry)^{(i)}\|_{B_i}\\
    B &=  \|(rx)^{(i+1)} - (ry)^{(i+1)}\|_{B_i}\\
    A' &= \max(0,A - 2^{i})\\
    B' &= \max(0,B-2^{i+1})
\end{align*}
Repeating reasoning from proof of Lemma~\ref{lem:l3}, we get
\begin{align}
|A' - B'| &\le 2 \cdot 2^i \nonumber\\
\|rx - ry\|_{B_i} - 2 \cdot 2^i &\le A' \le \|rx - ry\|_{B_i} \label{a_prime_bound}\\
\|rx - ry\|_{B_i} - 2 \cdot 2^{i+1} &\le B' \le \|rx - ry\|_{B_i} \label{b_prime_bound}
\end{align}
We also bound $B_i = 2^i/\eta \le 2^k / \eta = 8|x-y|$. Now let's bound the $|\widehat{g}_i(r x , ry)|$. A simple bound that comes from the definition of $\widehat{g}_i$ gives us:
\begin{align}
    |\widehat{g}_i(r x , ry)| = |A'^p - B'^p| \leq \max(A'^p,B'^p) \leq \|rx - ry\|_{B_i}^p.\snote{Use of \ref{a_prime_bound},\ref{b_prime_bound}}
    \label{g_i_simple_bound}
\end{align}
Unfortunately, this bound is not tight enough for larger values of $\|rx - ry\|_{B_i}$, so for $\|rx - ry\|_{B_i} \ge 6 \cdot 2^i$, we prove stronger bound:
\begin{align*}
|\widehat{g}_i(r x , ry)| &= |(A')^p - (B')^p|\\
&= \max(A',B')^p - \min(A',B')^p\\
&= \max(A',B')^p - (\max(A',B') - |A'-B'|)^p\\
&= \max(A',B')^p  \left(1 - \left(1 - \frac{|A'-B'|}{\max(A',B')}\right)^p\right)\\
&\le\|rx - ry\|_{B_i}^p \cdot (1 - (1-\frac{2\cdot 2^i}{\|rx - ry\|_{B_i}-2\cdot 2^i})^p)\\
&\le\|rx - ry\|_{B_i}^p \cdot (1 - (1-\frac{3\cdot 2^i}{\|rx - ry\|_{B_i}})^p)\\
&\le6 p (\ln 2)  \|rx - ry\|_{B_i}^{p-1} \cdot 2^i \bnote{$\|rx - ry\|_{B_i} \ge 6 \cdot 2^i$, by \eqref{eq:bound3}}.
\end{align*}

The \emph{norm} function $\| x \|_{B_i} = \min( x \bmod B_i, B_i - (x \bmod B_i))$ is in fact a triangle wave function varying between $0$ and $B_i/2$ with periodicity of $B_i$. So if the input is a random variable that follows uniform distribution at interval that is larger than its period (in our case $B_i$), the output has piece-wise uniform distribution, and its probability density function can be bounded by two times the probability density function of the uniform distribution for the whole domain. Formally, if $X = U(a,b)$ with $b-a \geq B_i$ then for $Y = \| X \|_{B_i}$ its probability density function $f_Y(y)$ is:
\begin{align}
    f_Y(y) \leq \frac{2}{B_i/2} \text{ for } 0 \leq y \leq B_i/2
    \label{density_bound}
\end{align}

As the input in the expression $\|rx - ry\|_{B_i}$ to the \emph{norm} function is uniformly distributed between $a = |x-y|$ and $b = 9 |x-y|$ and $B_i \leq 8 |x-y|$, we can use \ref{density_bound} to bound the probability density function of the $Z = \|rx - ry\|_{B_i}$ by $f_Z(y) \leq \frac{2}{B_i/2}$.

Now when we have the approximate probability density function (namely its upper bound) we can condition on the value of $\|rx-ry\|_{B_i}$ to be able to use the bounds for small and large values of $\|rx-ry\|_{B_i}$.
\begin{align*}
\ExpR{&|\widehat{g}_i(rx , ry)|} =\\
&= \ExpR{|\widehat{g}_i(rx , ry)|\ \given\ \|rx-ry\|_{B_i} \le 6 \eta B_i}\PrR{\|rx-ry\|_{B_i} \le 6 \eta B_i} +\\
&\qquad+\ExpR{|\widehat{g}_i(rx , ry)|\ \given\ \|rx-ry\|_{B_i} > 6 \eta B_i}\PrR{\|rx-ry\|_{B_i} > 6 \eta B_i}
\end{align*}
We bound those two summands separately. Now, bound on the first part:
\begin{align*}
\ExpR{&|\widehat{g}_i(rx , ry)|\ \given\ \|rx-ry\|_{B_i} \le 6 \eta B_i}\PrR{\|rx-ry\|_{B_i} \le 6 \eta B_i} \le \\
& \leq \ExpR{|\widehat{g}_i(rx , ry)|\ \given\ \|rx-ry\|_{B_i} \le 6 \eta B_i} 24\eta \\
& \leq \ExpR{\|rx-ry\|_{B_i}^p\ \given\ \|rx-ry\|_{B_i} \le 6 \eta B_i} 24\eta \snote{by \ref{g_i_simple_bound}}\\
& \leq (6 \eta  B_i)^p 24 \eta\\
& \leq 24 \eta (6 \cdot 2^i)^p\\
& \leq 24 \cdot 6 \eta (8\eta |x-y|)^p\\
& \leq 1152 \eta |x-y|^p
\end{align*}
And on the second part:
\begin{align*}
\ExpR{&|\widehat{g}_i(rx , ry)| \given \|rx-ry\|_{B_i} > 6 \eta B_i}\PrR{\|rx-ry\|_{B_i} > 6 \eta B_i} \le \\
& \leq \ExpR{ 6 p (\ln 2)  \|rx - ry\|_{B_i}^{p-1} \cdot 2^i \given \|rx-ry\|_{B_i} > 6 \eta B_i} \cdot\\
&\qquad \qquad \cdot \PrR{\|rx-ry\|_{B_i} > 6 \eta B_i} \\
& \leq \int_1^9 6 p (\ln 2)  \|rx - ry\|_{B_i}^{p-1} \cdot 2^i \frac{1}{8} \cdot \indicator{\|rx-ry\|_{B_i}>6 \eta B_i} dr\bnote{1/8 is the \\ density of r.v. $r$,\\$\indicator{\cdot}$ is the indicator \\ function}\\
& \leq 6 p (\ln 2) \cdot 2^i \int_{0}^{B_i/2} z^{p-1} \frac{2}{B_i/2} \indicator{z>6 \eta B_i} dz \bnote{changed to r.v. \\ $z=\|rx-ry\|_{B_i}$}\\
& \leq p (\ln 2) \frac{24}{B_i}\cdot 2^i \int_{0}^{B_i/2} z^{p-1} dz\\
& \leq (\ln 2) \frac{24}{B_i}\cdot 2^i \left(\frac{B_i}{2}\right)^{p}\\
& \leq 24 (\ln 2)  B_i^{p-1}\cdot 2^i\\
& \leq 24 (\ln 2) 2^{ip} \eta^{-p+1}\\
& \leq 24 (\ln 2) (8\eta|x-y|)^p \eta^{-p+1}\\
& \leq 192 (\ln 2) \eta|x-y|^p
\end{align*}
So finally, we reach:
\begin{align*}
\ExpR{&|\widehat{g}_i(rx , ry)|} =\\
&= \ExpR{|\widehat{g}_i(rx , ry)| \given \|rx-ry\|_{B_i} \le 6 \eta B_i}\PrR{\|rx-ry\|_{B_i} \le 6 \eta B_i} +\\
&\qquad+\ExpR{|\widehat{g}_i(rx , ry)| \given \|rx-ry\|_{B_i} > 6 \eta B_i}\PrR{\|rx-ry\|_{B_i} > 6 \eta B_i}\\
&\leq 1152 \eta |x-y|^p + 192 (\ln 2) \eta|x-y|^p \\
&\leq (\uglyconst) \eta |x-y|^p&&\qedhere
\end{align*}
\end{proof}
By combining bounds from Lemma \ref{summand_1}, and Lemma \ref{lem:ugly_lemma} we get:
\begin{align*}
\textrm{err}(x,y)&\le   \sum_{i=-u}^{k} \ExpR{ \left|  \widehat{g}_i(r x , ry) \right|} +  \ExpR{\left|\sum_{i=-u}^{k} g_i(r x,ry )\right|} + 2 (\ln 2) \frac{p}{U} |x-y|^p\\
 &\le 2 (\ln 2) \frac{p}{U}  |x-y|^p  + (32 (\ln 2) \eta |x-y|^p + \sum_{i=-u}^u (\uglyconst) \eta |x-y|^p\\
 &\le 2 (\ln 2) \frac{p}{U}  |x-y|^p  + (32 (\ln 2) \eta |x-y|^p + 2 \log U (\uglyconst) \eta |x-y|^p\\
 &\le 2 (\ln 2) \frac{p}{U}  |x-y|^p + (32 (\ln 2) + 2 \log U (\uglyconst)) \eta |x-y|^p\\
&\le 2 (\ln 2) \frac{p}{U}  |x-y|^p+ \mergedconst \log U \eta |x-y|^p\\
&\le \frac{\varepsilon p}{6 \ln 2} |x-y|^p + \frac{\varepsilon p}{6 \ln 2} |x-y|^p&  \text{w.l.o.g.}\  \varepsilon \ge  \frac{12 (\ln 2)^2}{U}\\
&\le \frac{\varepsilon p}{3 \ln 2} |x-y|^p&&\qedhere
\end{align*}
\end{proof}

To finish the proof of Theorem~\ref{th:small_p} we observe, that for any position $i$ of output, Algorithm~\ref{algo2} outputs $S''[i]$ such that $\mathbb{E}_r[|(S''[i])^p-(S[i])^p|] \le \frac{p \varepsilon}{3 \ln 2} \cdot (S[i])^p$. By Markov's inequality it means that with probability $2/3$ the relative error of $(\ell_p)^p$ approximation is at most $\frac{p}{\ln 2} \cdot \varepsilon$. Thus, by \eqref{eq:bound2} and \eqref{eq:bound3a} relative error of $\ell_p$ approximation is $\varepsilon$ with probability at least $2/3$. Now a standard amplification procedure follows: invoke Algorithm~\ref{algo2} independently $t$ times and take the median value from $S''_{(1)}[i], \ldots S''_{(t)}[i]$ as the final estimate $S_{\varepsilon}[i]$. Taking $t= \Theta(\log n)$ to be large enough  makes the final estimate good with
high probability, and by the union bound whole $S_{\varepsilon}$ is a good estimate of $S$. The complexity of the whole procedure is thus $\bigo(\log n \cdot \log U \cdot \eta^{-1} \cdot n \log m ) = \bigo(p^{-1} \varepsilon^{-1}n \log m \log^2 U \log n)$.

\section{Hamming distances}
As a final note we comment on a particularly simple form that Algorithm~\ref{algo2} takes for Hamming distances (limit case of $p=0$).

$$\widehat{g}_i(x,y) = \begin{cases}1 &\quad\text{if } \|x^{(i)}-y^{(i)}\|_{B_i} = 1\\0 &\quad\text{otherwise, } \end{cases}$$
with Algorithm being simply: pick at random $r \in [1,9]$, apply it multiplicatively to the input, compute text-to-pattern distance using $\sum_i \widehat{g}_i$ function.

Taking a limit of $p \to 0$ in proof of Theorem~\ref{th:small_p}, we reach that bound from Lemma~\ref{lem:ugly_lemma} becomes
$$\ExpR{|\widehat{g}_i(r x , ry)|} \le 24 \eta$$
and since all other terms in error estimate have multiplicative term $p$ in front, we reach
$$\textrm{err}(x,y) \le 2 \log U \cdot \ExpR{ |\widehat{g}_i(r x , ry)|} \le 48 \eta\log U.$$
We thus observe that expected relative error in estimation of Hamming distance is: $\mathbb{E}[S''[i] - S[i]] \le 48 \eta\log U \cdot S[i]$. With probability at least $2/3$ the relative error is at most $144 \eta \log U$. Setting $\eta = \frac{\varepsilon}{144 \log U}$ and repeating the randomized procedure $\Theta(\log n)$ with taking median for concentration completes the algorithm. The total runtime is, by a standard trick of reducing alphabet size to $2m$,  $\bigo(\frac{n}{\varepsilon} \log^2 m \log n)$, and while it compares unfavorably to algorithm from \cite{DBLP:conf/soda/KopelowitzP18} (in terms of runtime), it gives another insight on why $\widetilde\bigo(n/\varepsilon)$ time algorithm is possible for Hamming distance version of pattern matching.

\bibliography{bib}

%\clearpage
%\pagenumbering{roman}
%\appendix
\end{document}